\newcommand{\bb}[1]{\ensuremath\mathbf{#1}}
\newcommand{\sw}{{{{\tt L}}}}
\newcommand{\nw}{\raisebox{\depth}{\scalebox{1}[-1]{{{\sw}}}}}
\newcommand{\nec}{\reflectbox{\nw}}
\newcommand{\se}{\reflectbox{\sw}}
\newcommand{\updownF}{\raisebox{\depth}{\scalebox{1}[-1]{{{\tt F}}}}}
\newtheorem{definition}{Definition}
\title{$1$-String $B_1$-VPG Representations of Planar Partial $3$-Trees \\ and Some Subclasses}
\author{Therese Biedl\thanks{David R. Cheriton School of Computer Science, University of Waterloo, {\tt biedl@uwaterloo.ca}. Research supported by NSERC.}
        \and
        Martin Derka\thanks{David R. Cheriton School of Computer Science, University of Waterloo, {\tt mderka@uwaterloo.ca}. Research supported by an NSERC Vanier CGS.}}
\begin{document}
\thispagestyle{empty}
\maketitle

\abstract{
Planar partial $3$-trees are subgraphs of those planar graphs obtained by repeatedly inserting a vertex of degree $3$ into a face. In this paper, we show that planar partial $3$-trees have $1$-string $B_1$-VPG representations, i.e., representations where every vertex is represented by an orthogonal curve with at most one bend, every two curves intersect at most once, and intersections of curves correspond to edges in the graph. We also that some subclasses of planar partial 3-trees have \{\sw{}\}-representations, i.e., a $B_1$-VPG representation where every curve has the shape of an \sw{}.
}

\section{Introduction}

A~\emph{string representation} is a representation of a graph where every vertex $v$ is assigned a curve $\bb{v}$. Vertices $u,v$ are connected by an edge if and only if curves $\bb{u},\bb{v}$ intersect. A \emph{$1$-string representation} is a string representation where every two curves intersect at most once. 

String representations of planar graphs were first investigated by  Ehrlich, Even and Tarjan in 1976~\cite{cit:tarjan}. They showed that every planar graph has a $1$-string representation using ``general'' curves. In 1984, Scheinerman conjectured~\cite{cit:scheinerman} that every planar graph has a 
$1$-string representation, and furthermore curves are line segments. Chalopin, Gon\c{c}alves and Ochem~\cite{cit:chalopin-gonclaves-ochem,cit:chalopin-string} proved that every planar graph has a $1$-string representation in 2007. Scheinerman's conjecture itself remained open until 2009 when it was proved true by Chalopin and Gon\c{c}alves~\cite{cit:chalopin-seg}. 

Our paper investigates string representations that use \emph{orthogonal curves}, i.e., curves consisting of vertical and 
horizontal segments. If every curve has at most $k$ bends, these are called {\em $B_k$-VPG representations}. The hierarchy of $B_k$-VPG representations was introduced by Asinowski et al.~\cite{cit:asinowski, cit:asinowski-vpg}. VPG is an acronym for Vertex-Path-Grid since vertices are represented by paths in a rectangular grid. 

It is easy to see that all planar graphs are VPG-graphs (e.g. by generalizing the construction of Ehrlich, Even and Tarjan). For bipartite planar graphs, curves can even be required to have no
bends~\cite{cit:arroyo, cit:pach}. For arbitrary planar graphs, bends in orthogonal curves are required. Chaplick and Ueckerdt showed that 2 bends per curve always suffice~\cite{cit:chaplick}.  In a recent paper we 
strengthened this to give a $B_2$-VPG representation that is also a 1-string representation
\cite{cit:socg}.

$B_k$-VPG representations were further studied by Chaplick, Jel\'{i}nek, Kratochov\'{i}l and Vysko\v{c}il~\cite{cit:kratochvil} who showed that recognizing $B_k$-VPG graphs is NP-complete even when the input graph is given by a $B_{k+1}$-VPG representation, and that for every $k$, the class of $B_{k+1}$-VPG graphs is strictly larger than $B_k$-VPG. 

\paragraph*{Our Contribution}
Felsner et al.~\cite{cit:mfcs} showed that every planar 3-tree has a $B_1$-VPG representation.  Moreover, every vertex-curve has the shape of an {\tt L} (we call this an {\em \{{\tt L}\}-representation}). This implies that any two vertex-curves intersect at most once, so this is a 1-string $B_1$-VPG representation.  In this paper, we extend the result to more graphs, and in particular, show:

\begin{theorem}
\label{thm:result}
Every planar partial $3$-tree $G$ has a $1$-string $B_1$-VPG representation.
\end{theorem}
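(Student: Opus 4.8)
The plan is to piggy-back on the recursive structure of full planar $3$-trees: I would build the representation one vertex at a time along a construction sequence, spending the single allowed bend of each curve to realize \emph{exactly} the neighbours demanded by $G$ rather than all three neighbours forced by the $3$-tree. By definition $G$ is a subgraph of some planar $3$-tree $H$, so first I would fix a construction order of $H$: an initial triangle $abc$ followed by vertices $v_1,\dots,v_m$, where each $v_i$ is inserted into a triangular face $x_iy_iz_i$ of the graph built so far and made adjacent to $x_i,y_i,z_i$. The representation of $H$ is then built in this order under one uniform rule: when $\bb{v_i}$ is added, it is routed so as to cross $\bb{w}$ (for $w\in\{x_i,y_i,z_i\}$) precisely when $v_iw\in E(G)$. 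Finally I would delete the curves of the vertices of $V(H)\setminus V(G)$; since the surviving curves cross pairwise at most once and cross exactly along the edges of $G$, this produces the claimed $1$-string $B_1$-VPG representation of $G$.

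The heart of the argument is a strengthened inductive invariant. I would attach to every triangular face $f=xyz$ of the current subgraph of $H$ a private, axis-aligned rectangular region $R_f$ whose interior is disjoint from all curves and on whose boundary the three curves $\bb{x},\bb{y},\bb{z}$ expose short horizontal or vertical stubs in a fixed cyclic pattern. The pattern is to be chosen so that a single one-bend orthogonal curve placed inside $R_f$ can be made to cross any prescribed subset of $\{\bb{x},\bb{y},\bb{z}\}$, and nothing else, crossing each stub at most once. Because the mutual crossings among $\bb{x},\bb{y},\bb{z}$ were fixed when these curves were introduced, the invariant only has to bookkeep the exposed stubs and the emptiness of $R_f$.

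The induction step is then geometric. Inserting $v_i$ into $f_i=x_iy_iz_i$, I would place $\bb{v_i}$ as a one-bend curve inside $R_{f_i}$ crossing exactly the stubs of those $w\in\{x_i,y_i,z_i\}$ with $v_iw\in E(G)$; since $R_{f_i}$ is empty, these are the only new crossings, so $\bb{v_i}$ realizes precisely its $G$-edges and the $1$-string property survives. The curve $\bb{v_i}$, together with the three old stubs, cuts $R_{f_i}$ into three smaller empty regions, one for each child face $v_ix_iy_i$, $v_iy_iz_i$, $v_ix_iz_i$, and I would verify that in each of them the three bounding curves again expose stubs in the required cyclic pattern. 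This self-similarity re-establishes the invariant and drives the induction.

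The hard part will be designing a single invariant configuration that is simultaneously (a) flexible enough that a one-bend curve can hit an \emph{arbitrary} subset of the three exposed stubs, and (b) self-similar, so that each of the three regions carved out by $\bb{v_i}$ inherits exactly the same kind of configuration. Making the stub orientations and the bend direction of $\bb{v_i}$ agree in all cases --- all eight possible neighbour-subsets, propagated into all three child regions --- is the delicate case analysis, and I expect it to force the use of all four orientations of an \sw{}-shape. That is presumably why the general theorem yields only a $1$-string $B_1$-VPG representation; the subclasses for which a single orientation always suffices should be exactly those admitting the \{\sw{}\}-representations promised in the abstract.
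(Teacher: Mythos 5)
Your overall strategy is exactly the paper's: induct along the construction order of a planar $3$-tree $H\supseteq G$, keep a ``private region'' for each interior triangular face of the part of $H$ built so far, drop the new one-bend curve $\bb{v_i}$ into that region so that it meets precisely the neighbours prescribed by $G$ (this is also how the paper handles the passage from $H$ to its subgraph $G$ --- intersections are simply omitted for non-edges), and then recover private regions for the three new faces. The difference is in the invariant, and that is where your plan has a genuine gap rather than deferred routine work. You commit to a single self-similar configuration: an axis-aligned \emph{rectangle} exposing three stubs in one fixed cyclic pattern, reproduced identically in all three child regions. The paper's construction --- which is essentially the induction you describe --- does not close under that invariant: it is forced to admit a second region type, a $10$-sided ``F-shaped'' polygon, because after $\bb{v_i}$ is placed the three carved-out subregions do not all carry the rectangular pattern; an F-shaped parent in turn spawns both F-shaped and rectangular children. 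The authors explicitly remark that in their approach both region types and all four bend orientations are unavoidable. So the case analysis is over two region shapes times eight adjacency subsets (the paper's $16$ cases), not eight, and until you exhibit a configuration (or pair of configurations) that provably regenerates itself in every case, the induction step is an assertion, not a proof. Your own closing paragraph correctly identifies this as the crux, but identifying it is not the same as resolving it.

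A second, smaller omission: a partial $3$-tree is by definition a subgraph of \emph{some} $3$-tree, not automatically of a planar one, yet your private regions presuppose a planar embedding of $H$ in which each insertion triangle $x_iy_iz_i$ is an \emph{interior} face of the current graph (otherwise its private region need not exist). The paper invokes a known lemma that a planar partial $3$-tree is a subgraph of a planar $3$-tree whose construction order can be chosen with the initial triangle as the outer face; you assert the planar $H$ exists ``by definition,'' which it does not, so this needs a citation or a proof.
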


%The proof of Theorem~\ref{thm:result} is presented in Section~\ref{sec:proof}. 
%Planar partial $3$-trees (defined in Section~\ref{sec:def}) are the same as planar graphs of treewidth at most 3, and include outer-planar graphs, Apollonian networks, series-parallel graphs, 
%Halin graphs and IO-graphs. 

There are 4 possible shapes of orthogonal curves with one bend. Depending of where the bend is situated, we call them
%\emph{south-west}, \emph{north-west}, \emph{north-east} and \emph{south-east} and denote with their shapes 
\sw{}, \nw{}, \nec{} and \se{} 
respectively. Note that a horizontal or vertical curve without bends can be turned into any of the shapes by adding one bend. 

The construction of our proof of Theorem~\ref{thm:result} uses all 4 possible shapes 
\sw{}, \nw{}, \nec{}, \se{}.  However, for some subclasses of planar partial $3$-trees, 
we can show that fewer shapes suffice.  We use the notation $\{\sw{},\nec{}\}$-representation for a
$B_1$-VPG representation where all curves are either \sw{} or \nec{}, and similarly for other subsets of
shapes.  We can show the following:

\begin{theorem}
\label{thm:IO}
Any IO-graph has an $\{{\tt L}\}$-representation. 
\end{theorem}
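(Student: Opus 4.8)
The plan is to build the $\{{\tt L}\}$-representation incrementally, following the order in which vertices are stacked into triangular faces when $G$ is viewed inside a governing planar $3$-tree $T \supseteq G$. As in the construction of Felsner et al., I would carry a strengthened inductive invariant: with every triangular face currently on the boundary of the drawn part we associate a small axis-aligned \emph{attachment region}, a rectangle one of whose corners is a common meeting point of the face's bounding curves, arranged so that the three bounding ${\tt L}$'s can each be stabbed by a single fresh ${\tt L}$ whose bend sits inside that rectangle. Since an ${\tt L}$ occupies the left and bottom sides of its bounding box, the invariant should fix an \emph{orientation} of the three incident curves around the attachment corner (one reachable by growing the vertical arm, one by growing the horizontal arm, and one already crossed near the corner), so that a newly inserted ${\tt L}$ can simultaneously reach the curves it must and be kept short of any it must not. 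The base case is the initial triangle of $T$, which admits this oriented three-${\tt L}$ gadget directly from the Felsner et al.\ configuration.

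For the inductive step I take the next vertex $v$, stacked into a face $\{a,b,c\}$. In the full $3$-tree $v$ would join all of $a,b,c$; in the IO-graph it joins only the subset prescribed by the IO structure. I would place a new ${\tt L}$ with its bend in the attachment rectangle of that face and extend its vertical and horizontal arms \emph{just far enough} to cross precisely the curves of the present edges, stopping short of the curve(s) of any absent edge. The bend splits the attachment rectangle into sub-rectangles, one per newly created triangular face; I would then argue that each sub-face inherits a valid attachment corner together with the correctly oriented triple of incident curves, re-establishing the invariant for the recursion. Throughout, every arm is axis-monotone and meets any other curve at most once, so the $1$-string property is preserved, and because each inserted curve is an ${\tt L}$, no other shape is ever introduced.

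The crux is showing that the plain ${\tt L}$ orientation is never forced to change. In the general partial-$3$-tree theorem the other three orientations are needed precisely because a vertex may have to \emph{avoid} one of its three host curves while still reaching the others, and a single ${\tt L}$ can only relinquish the crossing with the curve lying beyond both of its arm-tips, i.e.\ the one seated in the far corner opposite the bend; if the curve to be avoided sits elsewhere, a reflected shape is unavoidable. The essential content of the IO hypothesis should be exactly that the pattern of present and absent edges around each stacking keeps the avoidable curve in that opposite-corner position, so that the plain ${\tt L}$ reaches all required neighbours and misses the forbidden one at once. I therefore expect the main work — and the main obstacle — to be verifying that this oriented configuration is faithfully reproduced in every sub-face after an insertion, i.e.\ that the IO property is genuinely hereditary under the stacking recursion; once that structural invariant is pinned down, the geometric placement of each ${\tt L}$ is routine.
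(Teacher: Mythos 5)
There is a genuine gap, and you have located it yourself: everything hinges on the claim that the IO hypothesis forces, at every stacking step, the one curve that a plain {\tt L} can afford to miss (the one ``beyond both arm-tips'') to coincide with the non-neighbour among $\{a,b,c\}$, and that this oriented configuration is inherited by all three sub-faces. You do not prove this, and it is unlikely to be provable along these lines. The elimination order of a $3$-tree completion $T\supseteq G$ does not interact cleanly with the interior/exterior partition of the IO-graph: the three predecessors $a,b,c$ of a stacked vertex can be any mix of interior and exterior vertices, an interior vertex of $G$ may be stacked early and then acquire arbitrarily many exterior neighbours among its successors, and nothing in the IO condition controls which of $a,b,c$ occupies which corner of the inherited attachment region. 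The paper explicitly notes that in the stacking framework both private-region types and all four one-bend shapes are genuinely needed, and it leaves ``every planar partial $3$-tree has an $\{{\tt L}\}$-representation'' as an open problem; so rescuing the {\tt L}-only conclusion by a purely local invariant inside that same recursion is precisely the part that is missing from your argument.

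The paper's proof abandons the stacking recursion for this theorem. It first builds an explicit staircase $\{{\tt L}\}$-representation of the outerplanar graph $G[O]$ induced by the exterior vertices (bend of $\bb{v_i}$ at $(i,-i)$, arms truncated according to the smallest and largest neighbour indices), and equips every interior face with a single fixed-orientation F-shaped ``IO-private region'' that crosses the curves of \emph{all} exterior vertices of that face in cyclic order---crucially, such a region may cross arbitrarily many curves. It then inserts the interior vertices one by one in any order; because they form an independent set, each such vertex has all of its neighbours among the exterior curves already threaded through its region, so a single vertical segment or {\tt L} suffices regardless of its degree, and the region splits into valid sub-regions for the new faces. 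If you want a working proof, switch to a decomposition of this kind (exterior skeleton first, then independent interior vertices of unbounded degree) rather than trying to thread the IO condition through the $3$-tree elimination order.
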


\begin{theorem}
\label{thm:Halin}
Any Halin-graph has an $\{\sw{},\nec{}\}$-representation, and only one
vertex uses a \nec{}-shape. 
\end{theorem}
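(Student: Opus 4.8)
The plan is to reduce the statement to the IO-graph case already handled by Theorem~\ref{thm:IO}. Write the Halin graph as $G = T \cup C$, where $T$ is a tree with no vertex of degree $2$ and $C = v_1 v_2 \cdots v_m v_1$ runs through the leaves of $T$ in the cyclic order of the plane embedding and bounds the outer face. First I would delete a single edge of the cycle, say $v_m v_1$, obtaining $G' = T \cup P$ with $P = v_1 v_2 \cdots v_m$ a path. I would then argue that opening the outer cycle into a path makes $G'$ an IO-graph: the deletion destroys the unique cycle bounding the unbounded face, leaving a graph whose outer boundary is a single path, which is the kind of structure an IO-representation is designed to handle. Granting this, Theorem~\ref{thm:IO} supplies an $\{{\tt L}\}$-representation $R'$ of $G'$.

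Next I would reinstate the deleted edge $v_m v_1$ by redrawing exactly one vertex. In a natural $\{{\tt L}\}$-representation the path $v_1,\dots,v_m$ appears as an ascending staircase of {\tt L}-shapes, so its endpoints $v_1$ and $v_m$ are the two extreme curves and both lie on the outer boundary of the drawing; every other adjacency of $v_1$ and $v_m$ is already realized by $R'$. I would replace the {\tt L} of $v_m$ by a \nec, whose two arms bend into the opposite quadrant and can therefore be routed back toward $v_1$. Because $v_1$ and $v_m$ sit on the outer boundary, the region outside the staircase is free of other curves, so I can extend the new \nec through this empty region until it meets the curve of $v_1$ exactly once, creating the missing edge while keeping every old intersection of $v_m$ and introducing no spurious ones. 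Only $v_m$ is changed, so the result uses a single \nec and has all remaining curves as {\tt L}'s, which is the claimed $\{\sw{},\nec{}\}$-representation; it is automatically $1$-string, since any two of the shapes {\tt L} and \nec meet in at most one point.

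I expect the last step to be the main obstacle: guaranteeing that the single shape change produces the edge $v_m v_1$ and \emph{no} other crossing. This forces control over the positions of $v_1$ and $v_m$ in the IO-representation — ideally they are the extreme ends of the boundary staircase with a clear corridor of empty space between them along the outer face. I would therefore either prove a slightly strengthened version of Theorem~\ref{thm:IO} that places the two path-endpoints at prescribed extreme positions, or choose the deleted cycle edge so that its endpoints are already extreme in the representation produced. A secondary point needing care is the first step, namely verifying that $G'$ genuinely satisfies the definition of an IO-graph; should this fail in some configurations, I would fall back on a direct induction using the recursive ``fan expansion'' of Halin graphs, maintaining the staircase shape of the outer boundary with {\tt L}-shapes throughout and closing the cycle with a single \nec only at the very end.
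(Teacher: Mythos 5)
There is a genuine gap at the first step: after deleting the cycle edge $v_mv_1$, the graph $G'=T\cup P$ is \emph{not} an IO-graph in general. The interior vertices of a Halin graph are the internal vertices of the tree $T$, and these induce a subtree of $T$ --- they are pairwise connected by tree edges, not an independent set. Removing one cycle edge only merges the outer face with the single interior face incident to $v_mv_1$; this brings the tree path between $v_1$ and $v_m$ (up to their least common ancestor) onto the outer face, but every other internal vertex of $T$ stays interior and keeps its tree edges to other interior vertices. So unless $T$ is a star (i.e.\ $G$ is a wheel), $G'$ violates the defining property of IO-graphs and Theorem~\ref{thm:IO} cannot be invoked. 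The reduction therefore collapses at its foundation, and the second step inherits the damage: you yourself flag as ``the main obstacle'' the guarantee that the single reshaped curve reaches $\bb{v_1}$ through empty space, but that guarantee cannot be extracted from a generic $\{\sw\}$-representation; it needs specific structural control over which curve ends at the extreme coordinate and which curve has an unused arm available for rerouting.

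Your fallback --- a direct induction over a recursive decomposition of the Halin graph, keeping the leaves as a staircase of $\sw$-shapes and closing the cycle with one $\nec$ at the very end --- is in fact the right idea and is essentially what the paper does: it roots $T$ at an interior neighbour $r$ of $v_k$, processes the tree vertices in post-order, represents each internal vertex by a vertical segment crossing the horizontal rays of its children (each ray then terminates, since no vertex has two parents), and maintains the invariant that unfinished vertices end in horizontal rays ordered consistently with the outer face. The choice of $v_k$ as a child of $r$ is what makes the closing step work: $\bb{v_k}$'s ray ends at the maximal $x$-coordinate and can be extended, while $\bb{v_1}$'s vertical segment carries no intersections, so its horizontal segment can be stretched past all other curves and bent downward into a $\nec$ to meet $\bb{v_k}$. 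To turn your proposal into a proof you would need to carry out this induction explicitly rather than route through Theorem~\ref{thm:IO}.
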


We give the definitions of these graph classes and the proof of these theorems
in the next three sections, and end with open problems in Section~\ref{sec:conclusion}.

\section{Planar partial 3-trees}
\label{sec:def}
%\section{Proof of Theorem~\ref{thm:result}}
\label{sec:proof}

A {\em planar graph} is a graph that can be drawn without edge crossings.
If one such drawing $\Gamma$ is fixed, then a {\em face} is a maximal connected
region of $\Bbb{R}^2-\Gamma$. The {\em outer face} corresponds to the unbounded region;
the {\em interior  faces} are all other faces.  A vertex is called {\em exterior} if it
is on the outer face and {\em interior} otherwise.

A {\em $3$-tree} is a graph that is either a triangle or has a vertex  order $v_1,\dots,v_n$ such that for $i\geq 4$, vertex $v_i$ is adjacent to exactly three predecessors and they form a triangle.  
A \emph{partial $3$-tree} is a subgraph of a $3$-tree.   

Our proof of Theorem~\ref{thm:result} employs the method of ``private regions'' used previously for various string representation constructions~\cite{cit:socg, cit:chalopin-string, cit:mfcs}. We define the following:

\begin{definition}[F-shape and rectangular shape] An \emph{F-shaped area} is a region bounded by a $10$-sided polygon with CW or CCW sequence of interior angles $90\degree$, $270\degree$, $90\degree$, $90\degree$, $270\degree$, $270\degree$, $90\degree$, $90\degree$, $90\degree$ and $90\degree$.  A~\emph{rectangle-shaped area} is a region bounded by an axis-aligned rectangle.
\end{definition}

\begin{definition}[Private region]
Given a $1$-string representation,
a \emph{private region} of vertices $\{a,b,c\}$
is an F-shaped or rectangle-shaped 
area that intersects (up to permutation of names) curves
$\bb{a}, \bb{b}, \bb{c}$ in the way depicted in 
Figure~\ref{fig:private-region}(a), and that intersects no other curves and private regions.
\end{definition}

\begin{figure}[ht]
\centering
\includegraphics[width=\columnwidth]{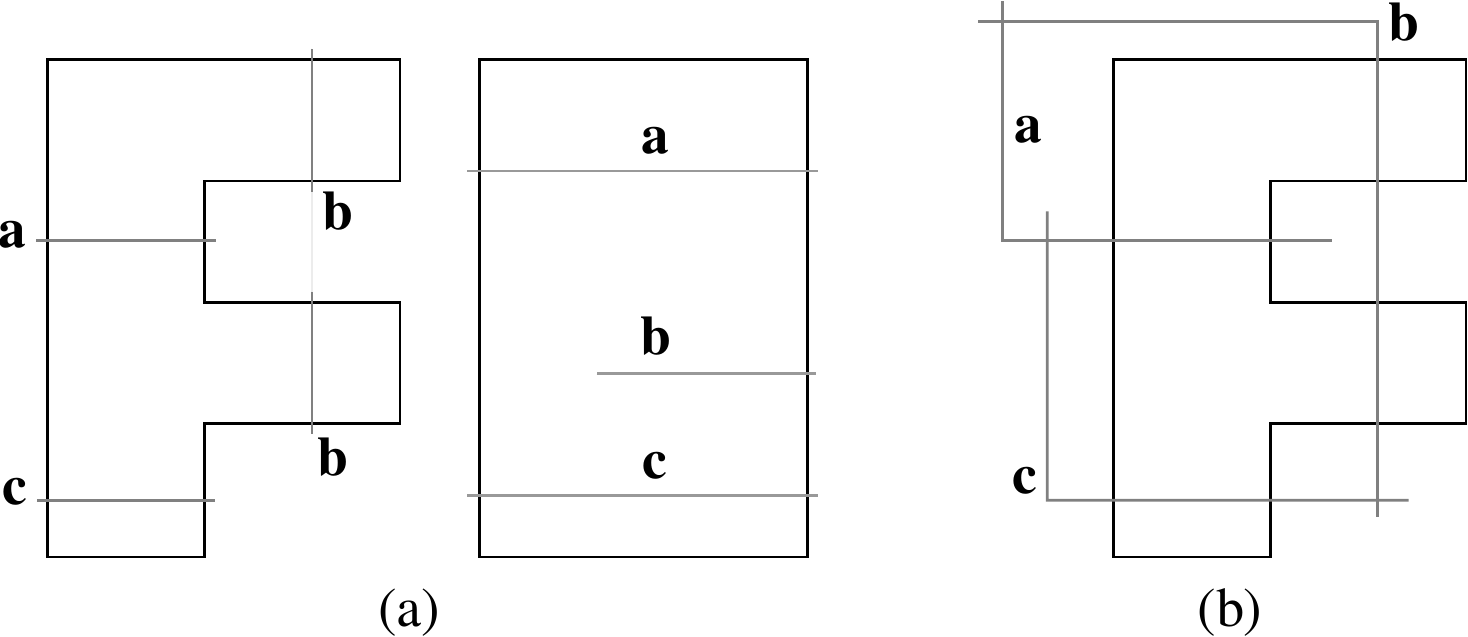}
\caption{(a) An F-shaped 
(left) 
and rectangle-shaped 
(right) 
private region of 
$\{a,b,c\}$.
(b) The base case. Intersections among $\{\bb{a},\bb{b},\bb{c}\}$ can be omitted as needed.}
\label{fig:private-region}
\label{fig:basecase}
\end{figure}

Now we are ready to prove Theorem~\ref{thm:result}. 
Let $G$ be a planar partial $3$-tree.   By definition, there exists
a 3-tree $H$ for which $G$ is a subgraph.  One can show~\cite{cit:biedl}
that we may assume $H$ to be planar.  Let $v_1,\dots,v_n$ be a vertex
order of $H$ such that for $i\geq 4$ vertex $v_i$ is adjacent to 3 predecessors
that form a triangle.  
In particular, $v_4$ is incident to a triangle formed by $\{v_1,v_2,v_3\}$.
One can show (see e.g.~\cite{cit:biedl}) that 
the vertex order can be chosen in such a way that $\{v_1,v_2,v_3\}$ is
the outer face of $H$ in some planar drawing.

For $i\geq 3$, let $G_i$ and $H_i$ be the subgraphs of $G$ (respectively
$H$) induced by vertices $v_1,\dots,v_i$.
We prove Theorem~\ref{thm:result} by showing the following by induction on $i$:
\begin{quotation}
\noindent
$G_i$ has a $1$-string $B_1$-VPG representation with a private 
region for every interior face of $H_i$. 
\end{quotation}

In the base case, $i = 3$ and $G \subseteq K_3 \simeq H$. Construct a representation $R$ and find a private region for the unique interior face of $H$
as depicted in Figure~\ref{fig:basecase}(b). 

Now consider $i \geq 4$.  
By induction, construct a representation $R_0$ of $G_{i-1}$ that contains a private region for 
every interior face of $H_{i-1}$. 

Let $\{a,b,c\}$ be the predecessors of $v_i$ in $H$. Recall that they form a triangle.
Since $H$ is planar, this triangle must form a face in $H_{i-1}$. 
Since $\{v_1,v_2,v_3\}$ is the outer face of $H$ (and hence also of $H_{i-1}$), 
the face into which $v_i$ is added must be an interior face, so there exists an interior face 
$\{a,b,c\}$ in $H_{i-1}$.  Let $P_0$ be the private region that exists for $\{a,b,c\}$ in $R_0$; 
it can have the shape of an F or a rectangle. 

Observe  that in $G$, vertex $v_i$ may 
be adjacent to any possible subset of $\{a,b,c\}$.
This gives 16 cases (two possible shapes, up to rotation and reflection, and 8 possible adjacencies).

In each case, the goal is to place a curve $\bb{v_i}$ inside $P_0$ such that it intersects exactly
the curves of the neighbours of $v_i$ in $\{a,b,c\}$ and no other curve.
Furthermore, having placed $\bb{v_i}$ into $P_0$, we need to find a private region for the
three new interior faces in $H_i$,  that is, the three faces formed by $v_i$ and two of 
$\{a, b, c\}$.   

\smallskip
\noindent\textbf{Case 1: $P_0$ has the shape of an F.}
After possible rotation / flip of $R_0$ and renaming of $\{a,b,c\}$ we may assume that $P_0$
appears as in Figure~\ref{fig:private-region}(a).  If $(v_i,a)$ is an edge, then place a bend
for curve $\bb{v_i}$ in the region above $\bb{a}$.  Let the vertical segment of $\bb{v_i}$
intersect $\bb{a}$ and (optionally) $\bb{c}$.  Let the horizontal segment of $\bb{v_i}$
intersect (optionally) the top occurrence of $\bb{b}$.
\iffalse
Rotate, stretch and shift representation $R_0$ so that the bottom left corner of $P_0$ (referring to the orientation shown in Figure~\ref{fig:private-region}) is located at coordinates $(0 - \epsilon,0-\epsilon)$, the top right corner is located at $(1+\epsilon, 3+\epsilon)$ where $\epsilon = 0.25$, the horizontal segment of $\bb{a}$ (which intersects $P_0$) has coordinate $y = 2$, the horizontal segment of $\bb{c}$ has coordinate $y = 0$ and the vertical segment of $\bb{b}$ has coordinate $x = 1$. 

If $a$ is a neighbour of $v$, represent $v$ by a \nw{}-curve $\bb{v}$ with the bend located at coordinates $(0,3)$. Stretch $\bb{v}$ suitably downwards in order to intersect both $\bb{a}$ as well as $\bb{c}$ if $c$ is a neighbour of $v$. If $b$ is a neighbour of $v$, stretch $\bb{v}$ suitably rightwards to intersect $\bb{b}$.

If $c$ a neighbour of $v$ but $a$ is not, represent $v$ by \nw{}-curve $\bb{v}$ with the bend at coordinates $(0,1)$ whose vertical segments stretches downwards enough to intersect $\bb{c}$. If $b$ is a neighbour of $v$, stretch $\bb{v}$ rightwards to intersect~$\bb{b}$. 

If neither of $a,c$ belong to neighbours of $v$, represent $v$ by a horizontal segment $\bb{v}$ with origin at $(0,1)$ which stretches rightwards and does or does not intersect $\bb{b}$ if $b$ is or is not a neighbour of $v$ respectively. 
\fi
If $(v_i,a)$ is not an edge but $(v_i,c)$ is an edge, then place a bend for $\bb{v_i}$ in the region below $\bb{a}$, 
let the vertical segment of $\bb{v_i}$ intersect $\bb{c}$ and the horizontal segment of $\bb{v_i}$ intersect
(optionally) $\bb{b}$.
Finally, if neither $(v_i,a)$ nor $(v_i,c)$ is an edge, then $\bb{v_i}$ is a horizontal segment in the
region below $\bb{a}$ and above $\bb{c}$ that (optionally) intersects $\bb{b}$.

In all sub-cases, $\bb{v_i}$ remains inside $P_0$, so it cannot intersect any other curve of $R_0$. 
Private regions for the newly created faces can be found as shown in Figure~\ref{fig:efs}.

\begin{figure}[ht]
\centering
\includegraphics[width=0.3\columnwidth,page=2]{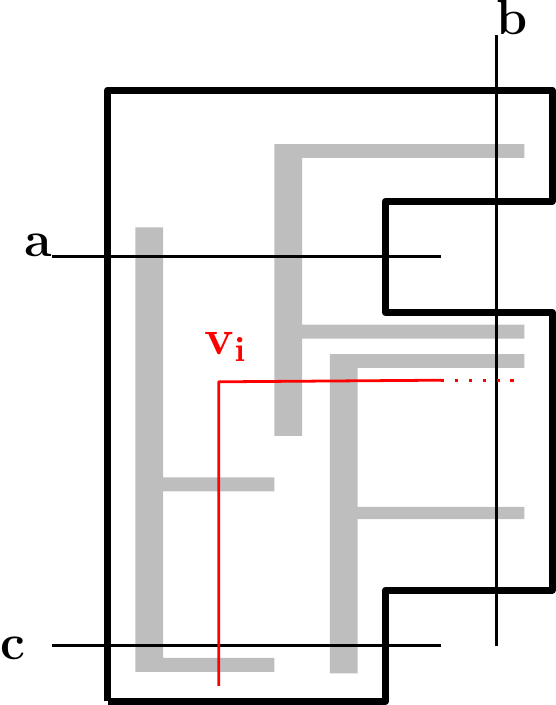}
\includegraphics[width=0.3\columnwidth,page=1]{f-abbrv}
\includegraphics[width=0.3\columnwidth,page=3]{f-abbrv}
\caption{Inserting curve $\bb{v_i}$ into an F-shaped private region.  (Left) $(v_i,a)$ is an edge.
(Middle) $(v_i,a)\not\in E$, but $(v_i,c)\in E$.  (Right) $(v_i,a),(v_i,c)\not\in E$.}
\label{fig:efs}
\end{figure}

\smallskip
\noindent\textbf{Case 2: $P_0$ has the shape of a rectangle.}
After possible rotation / flip of $R_0$ and renaming of $\{a,b,c\}$ we may assume that $P_0$
appears as in Figure~\ref{fig:private-region}(a).  If $(v,a)$ is an edge, then $\bb{v}$
is a vertical segment that intersects $\bb{a}$ and (optionally) $\bb{b}$ and (optionally)
$\bb{c}$.  If $(v,c)$ is an edge, then symmetrically $\bb{v}$ is a vertical segment that
intersects $\bb{c}$ and (optionally) $\bb{b}$ and $\bb{a}$.  Finally if neither $(v,a)$ nor
$(v,c)$ is an edge, then let $\bb{v}$ be a horizontal segment between $\bb{a}$ and $\bb{c}$
with (optionally) a vertical segment attached to create an intersection with $\bb{b}$.

In all cases, $\bb{v}$ remains inside $P_0$, so it cannot intersect any other curve of $R_0$. 
Private regions for the newly created faces can be found as shown in Figure~\ref{fig:rectangles}.

\begin{figure}[ht]
\centering
\includegraphics[width=0.3\columnwidth,page=2]{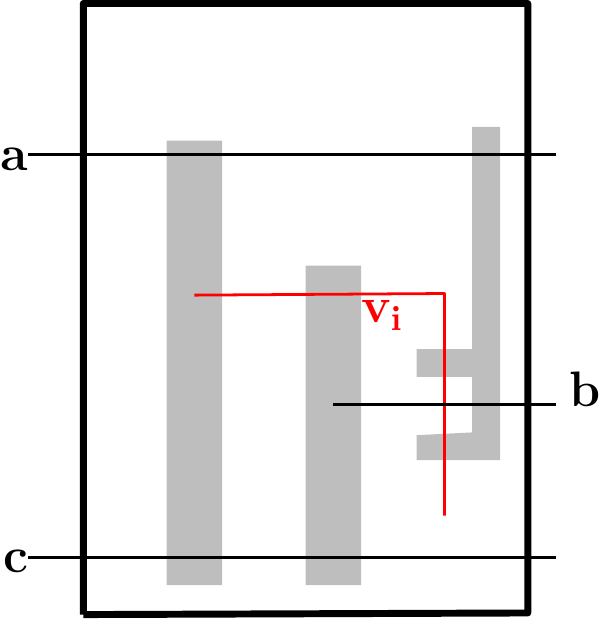}
\includegraphics[width=0.3\columnwidth,page=1]{rectangles-abbrv}
\includegraphics[width=0.3\columnwidth,page=3]{rectangles-abbrv}
\caption{Inserting curve $\bb{v_i}$ into a rectangle-shaped private region.
(Left) $(v_i,a)$ is an edge.
(Middle) $(v_i,a),(v_i,c)\not\in E$, but $(v_i,b)\in E$.
(Right) $(v_i,a),(v_i,b),(v_i,c)\not\in E$.
}
\label{fig:rectangles}
\end{figure}

Theorem~\ref{thm:result} now holds by induction. \hfill $\square$

\bigskip

We note here that in our proof-approach, both types of private regions and all four
shapes with one bend are required in some cases.

\section{IO-Graphs}
\label{sec:other-classes}
\label{sec:IO}

%In this section, we strengthen the result presented in the previous section to a subclass of planar partial 3-trees---the IO-graphs. We show that such graphs have 1-string $B_1$-VPG representations that uses only one of the 
%4 possible shapes of the curves.  

An \emph{IO-graph}~\cite{cit:iographs} is a 2-connected planar graph with a planar embedding such that the interior vertices form a (possibly empty) independent set. 
One can easily show \cite{cit:iographs} that every IO-graph is a planar partial $3$-tree.

We now prove Theorem~\ref{thm:IO} by constructing an $\{\sw{}\}$-representation of an IO-graph $G$.
Let $O$ be the set of exterior vertices; by definition these induce an {\em outerplanar graph}, i.e., a
graph that can be embedded so that all vertices are on the outer face.   Moreover, since $G$ is
2-connected, the outer face is a simple cycle, and hence the outerplanar graph $G[O]$ is also
2-connected.  We first construct an $\{\sw{}\}$-representation of $G[O]$, and then insert the
interior vertices.  To do so, we again use private regions, but we modify their definition slightly
in three ways:  (1) Interior vertices may have arbitrarily high degree, and so the private
regions must be allowed to cross arbitrarily many curves. (2)  Interior vertices may only be
adjacent to exterior vertices.  It therefore suffices for the private region of
f face $f$ to intersect only those curves that belong to exterior vertices on $f$.
It is exactly this latter observation that allows
us to find private regions more easily, therefore use fewer shapes for them, and 
therefore use fewer shapes for the curves.  We can therefore also add:  (3) The private
region must be an F-shape, and it must be in the rotation \updownF.  The formal definition
is given below:

\begin{definition}[IO-private region]
Given a 1-string representation of an IO-graph, an \emph{IO-private region} of a face $f$ 
is an $F$-shaped area $P$, in the rotation \updownF, which intersects
curves $\bb{x_1}, \bb{x_2}, \ldots, \bb{x_d}$ as shown in Figure~\ref{fig:long-private}. 
Here, $\{x_1,\dots,x_d\}$ is a subset of the vertices of $f$ enumerated in CCW order,
and includes all exterior vertices that belong to $f$ (it may or may not include other vertices).
Lastly, $P$ intersects no other curves and no other private regions. 
\end{definition}

\begin{figure}
\centering
\includegraphics[width=.4\columnwidth,page=1]{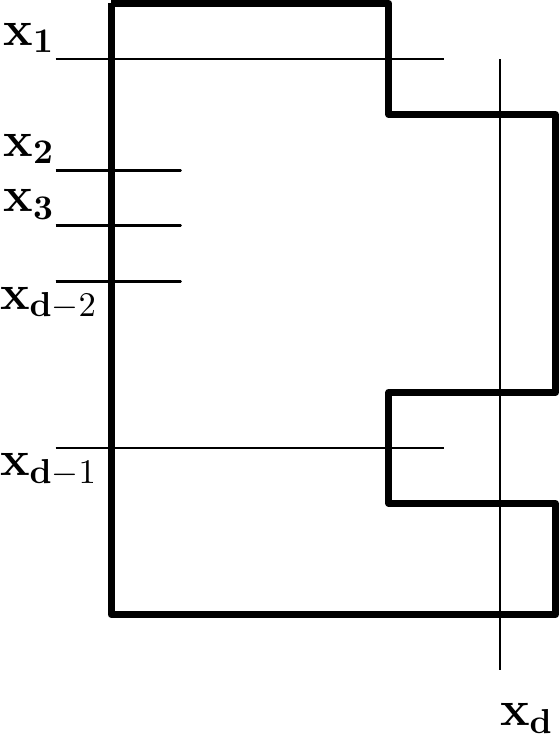}
\caption{An IO-private region.
We require that the supporting line of $\bb{x_i}$ (for $i=2,\dots,k-2$) 
intersects the upper segment of $\bb{x_d}$.} 
\label{fig:long-private}
\end{figure}

\begin{lemma}
\label{thm:skeleton-private}
\label{lem:skeleton-private}
Any outer planar graph has an \{{\tt\em L}\}-representation with an IO-private region 
for every interior face.  
\end{lemma}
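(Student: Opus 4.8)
The plan is to prove Lemma~\ref{lem:skeleton-private} by induction on the number of vertices, mirroring the incremental private-region technique used for Theorem~\ref{thm:result} but restricting every curve to the single shape \sw{} and every private region to an \updownF{}-shaped IO-private region. Since any outerplanar graph can be augmented with edges to a maximal outerplanar graph (a triangulated polygon) and, in the relevant application, $G[O]$ is already $2$-connected, I would first treat the case where $G$ is a $2$-connected maximal outerplanar graph, i.e.\ its outer boundary is a simple cycle and every interior face is a triangle. The general statement then follows by reinstating the deleted edges: an absent edge is realized by simply shortening the two \sw{}-curves so that they no longer cross, and the IO-private region of a non-triangular face is obtained by taking the IO-private region of one of the triangles into which it was split and extending its free arm along the (now crossing-free) diagonals so that it also meets the remaining boundary curves of the face in CCW order.

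For the base case $n=3$ I would place three \sw{}-curves in a small staircase so that the three pairs cross exactly once, and read off a single \updownF{}-shaped IO-private region for the unique interior triangle, exactly as in Figure~\ref{fig:long-private} with $d=3$. For the inductive step, I would use the fact that every triangulated polygon on $n\ge 4$ vertices has at least two \emph{ears}, i.e.\ degree-$2$ vertices whose two neighbours are adjacent. Fix such an ear $v$ with neighbours $u,w$; then $(u,w)$ is an edge, $\{u,v,w\}$ is an interior face, and $G'=G-v$ is again a $2$-connected maximal outerplanar graph. By induction $G'$ has an $\{\sw{}\}$-representation with an IO-private region for every interior face. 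In passing from $G'$ to $G$, the edge $(u,w)$ changes from a boundary edge to an interior edge and the single new interior face $\{u,v,w\}$ is created, while every other interior face (and hence its IO-private region) is inherited unchanged. So it suffices to insert one \sw{}-curve $\bb{v}$ that crosses exactly $\bb{u}$ and $\bb{w}$, and to carve one new \updownF{}-region for $\{u,v,w\}$, without disturbing the existing curves or regions.

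To make this insertion always possible, I would strengthen the induction hypothesis with a \emph{boundary-access invariant}: the representation of the current polygon is drawn so that, on the outer side of every boundary edge $(u,w)$, there is an empty axis-aligned ``slot'' in which the upper end of $\bb{u}$ and the right end of $\bb{w}$ (after suitable renaming according to their CCW order on the boundary) are the only curve parts present, shaped exactly so that a fresh \sw{}-curve can be dropped in to cross both of them once and so that an \updownF{} can be carved around the resulting triangle. Processing ears in the order given by a rooted weak-dual tree of the triangulation, each attachment consumes the slot of one boundary edge and produces two new boundary edges $(u,v)$ and $(v,w)$, each again equipped with its own slot; verifying that the placement of $\bb{v}$ re-establishes the invariant for these two new edges (and leaves the slots of all other boundary edges untouched) is the routine but repetitive part of the argument, handled by a small figure analogous to Figures~\ref{fig:efs} and~\ref{fig:rectangles}.

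The main obstacle is the geometric realizability under the single-shape restriction: unlike the proof of Theorem~\ref{thm:result}, which is free to choose among \sw{}, \nw{}, \nec{} and \se{} and between F- and rectangle-shaped regions, here every vertex-curve must be an \sw{} and every private region an \updownF{}. The difficulty is therefore to find one fixed geometric template --- the boundary-access invariant above --- that is simultaneously (i) maintainable when only \sw{}-curves are added, (ii) rich enough that the two new crossings are the \emph{only} new crossings (so the representation stays $1$-string), and (iii) compatible with reading off an \updownF{}-region whose curves are met in the prescribed CCW order with the supporting-line condition of Figure~\ref{fig:long-private}. Getting all three to hold at once --- in particular ensuring that the \sw{}-curve of the ear apex reaches both neighbours without being forced to also cross a third curve --- is where the real care is needed; once the template is fixed, the induction closes immediately.
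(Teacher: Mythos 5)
Your plan is genuinely different from the paper's proof, which is not inductive at all: it writes down explicit coordinates (the bend of $\bb{v_i}$ at $(i,-i)$, the vertical segment reaching to the minimum-index neighbour, the horizontal segment to the maximum-index neighbour), proves the result is 1-string by a $K_4$-minor argument, then stretches each horizontal segment to the largest index $t_i$ of a co-facial vertex and reads off all IO-private regions at once, each sitting just left of the vertical segment of the highest-indexed vertex of its face. The global staircase structure is what makes every face's curves appear in the nested CCW pattern demanded by Figure~\ref{fig:long-private}. Your ear-decomposition induction could in principle be made to work, but as written it has two real gaps. First, the entire proof rests on the ``boundary-access invariant,'' and you never exhibit it: you postulate a slot shaped ``exactly so that'' a fresh \sw{}-curve crosses $\bb{u}$ and $\bb{w}$ once each, creates no other crossings, re-spawns two slots, and leaves room for an \updownF{}-region --- and you yourself flag this as ``where the real care is needed.'' That is the lemma; deferring it to an unspecified figure leaves nothing proved.

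Second, and more seriously, the passage back from the maximal to the general outerplanar graph fails as described. An IO-private region for a face $f$ with vertices $x_1,\dots,x_d$ must meet \emph{all} $d$ curves, in CCW order, with the supporting lines of $\bb{x_2},\dots,\bb{x_{d-2}}$ crossing the upper segment of $\bb{x_d}$ --- a rigid nested arrangement. Your induction only ever produces \updownF{}-regions for triangles, scattered wherever the ear insertions placed them, and ``extending the free arm along the deleted diagonals'' of one such region will not in general sweep the remaining $d-3$ curves in the required pattern; nothing in your invariant forces the curves of a non-triangular face to be mutually positioned so that any single \updownF{} can reach them all. (The deletion of absent chords by ``shortening'' also silently needs an argument --- shortening $\bb{v_i}$ or $\bb{v_k}$ to kill one crossing must not kill crossings with genuine neighbours; this is saved by the same interleaving-chords/$K_4$ observation the paper uses, but you would have to say so.) The paper sidesteps all of this by building the global staircase first and treating faces of arbitrary size uniformly; if you want to keep the inductive route, you must either maintain the full $d$-ary IO-private region for every current face as part of the invariant, or prove separately that your final picture coincides with a staircase-like canonical form.
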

\begin{proof}
We may assume that the outerplanar graph is 2-connected, otherwise we can add vertices to make
it so and delete their curves later.  
Enumerate the vertices on the outer face as $v_1,\dots,v_k$ in CCW order.
For every vertex $v_i$ on the outer-face, let $\bb{v_i}$ be an {\tt L} with bend at $(i,-i)$.   
The vertical segment reaches of $\bb{v_i}$ reaches until $(i,-r_i+\varepsilon)$, 
where $r_i=\min\{j: (v_j,v_i)\in E$\}.  (Use $r_0=0$.)
The horizontal segment reaches of $\bb{v_i}$ reaches until $(s_i+\varepsilon,i)$, 
where $s_i=\max\{j: (v_j,v_i)\in E$\}.    (Use $s_k=k$.)  See also Figure~\ref{fig:skeleton}.

\begin{figure}[ht]
\centering
\includegraphics[width=.9\columnwidth,page=1]{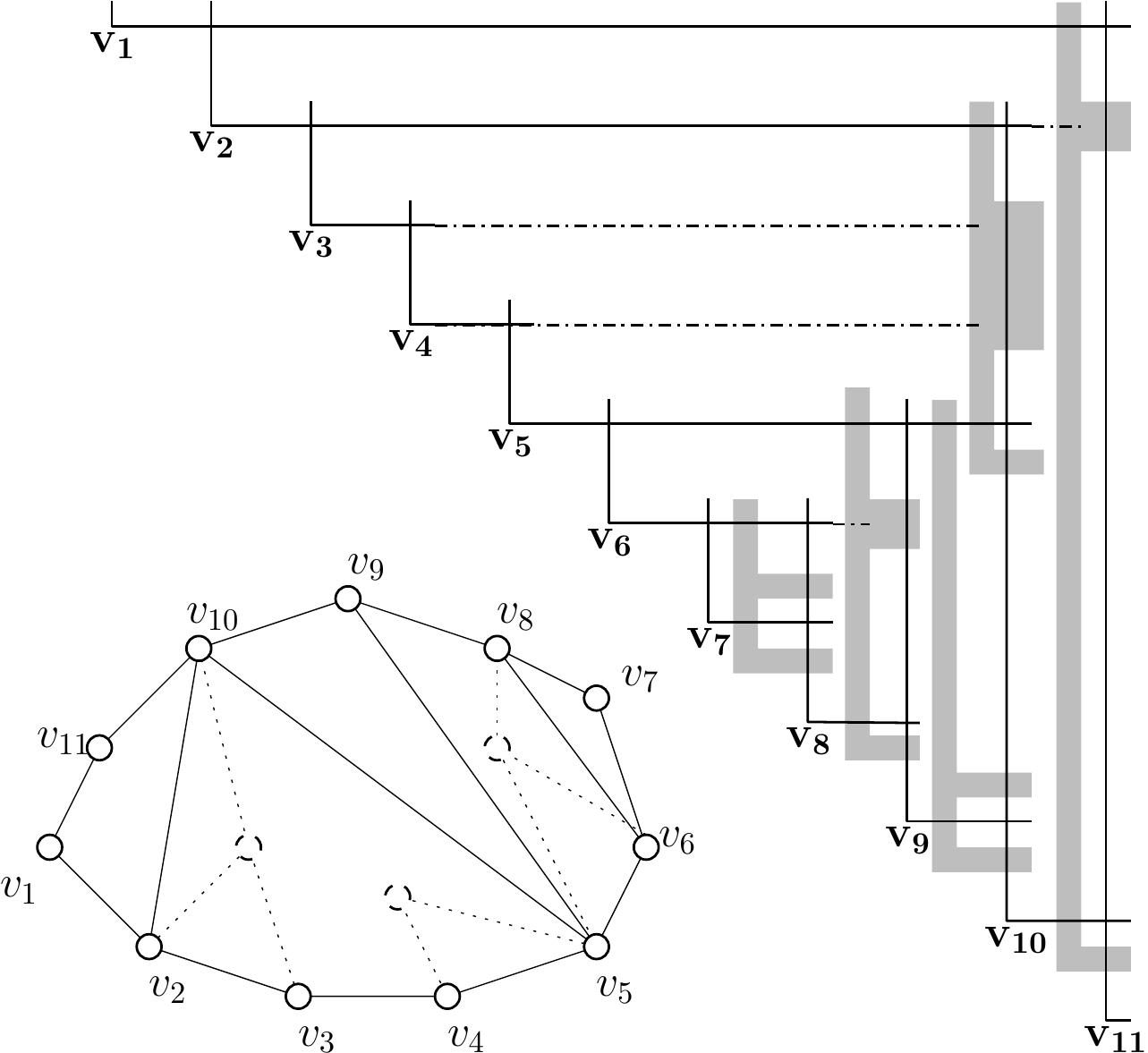}
\caption{Example of an IO-graph and the \{{\tt L}\}-representation of $G[O]$. The IO-private regions are shaded in grey. }
\label{fig:skeleton}
\end{figure}

It is quite easy to see that this is a 1-string
representation.  For every edge $(v_i,v_k)$ with $i<k$
we have created an intersection at $(k,-i)$.  Assume for contradiction that $\bb{v_i}$ and $\bb{v_k}$
intersect for some $(v_i,v_k)\not\in E$ with $i<k$.    Then we must have $s=\max\{j:(v_i,v_j)\in E\}>k$,
else there is no intersection.  Also $r=\min\{j:(v_j,v_k)\in E\}<i$, else there is no intersection.
But then $\{v_i,v_j,v_s,v_r\}$, together with the outer face, form a $K_4$-minor; this is impossible
in an outer planar graph.

Thus we found the \{{\tt L}\}-representation.   To find IO-private regions, we stretch horizontal segments
of curves further as follows.  For vertex $v_i$, set $t_i=\max\{j: v_i$ and $v_j$ are on a common interior face\}.
If $t_i>s_i$, then expand $\bb{v_i}$ horizontally until $t_i-\varepsilon$.  To see that this does not
introduce new crossings, observe that adding $(v_i,v_{t_i})$ to the graph would not destroy outerplanarity, 
since the edge could be routed inside the common face.     The $\{{\tt L}\}$-representation of such an
expanded graph would contain the constructed one and also contain the added segment.  Therefore the added
segment cannot intersect any other curves.

After stretching all curves horizontally in this way, 
an IO-private region for each interior face $f$ can then be inserted to the left of the
vertical segment of $\bb{v_j}$, where $v_j$ is the vertex on $f$ with maximal index;
see also Figure~\ref{fig:skeleton}.
\end{proof}

Now we can prove Theorem~\ref{thm:IO}, i.e., we can show that every IO-graph $G$ has an
\{{\tt L}\}-representation.  Start with the \{{\tt L}\}-representation of $G[O]$
of Lemma~\ref{lem:skeleton-private}.  We add the interior vertices $v_1,\dots,v_{n-k}$ to this
in arbitrary order, maintaining the following invariant:
\begin{quotation}
For every interior face of the current graph
there exists an IO-private region.
\end{quotation}
\iffalse
Graph $G$ can be obtained from $G[O]$ by a sequence of vertex insertions---in every step, we insert a vertex into a face 
$f$ of $G[O]$ and connect it to at least two vertices in $O$. We label the vertices of $f$ by $x_1, x_2, \ldots, x_k$ so that for every $i < j$, if $v_i = x_r$ and $v_j = x_s$, where $i, j$ is the order of vertices based on $O$, then $r < s \leq k$. 

We fix the order in which vertices are inserted into a face $f$ bounded by vertices $x_1, x_2, \ldots, x_d$ as follows: Vertex $u$ with neighbour $x_i$ will be inserted before vertex $v$ with neighbour $x_j$, $i < j$. If $x_i$ is the neighbour with minimal index $i$ for two vertices $u$ and $v$, they will be inserted in CW order with respect to the embedding around $x_i$ starting with $x_{i-1}$ or $x_{k}$ iff $i = 1$.
\fi
Clearly this invariant holds for the representation of $G[O]$.
Let $v$ be the next interior vertex to be added, and let $f$ be the face where it should be inserted.
By induction there exists a IO-private region $P_0$ for face $f$ such that the curves $\bb{x_1},\dots,\bb{x_d}$
that intersect $P_0$ include the curves of all exterior vertices that are on $f$, in CCW order.
We need to place an \sw{}-curve $\bb{v}$ into $P_0$, intersecting curves of neighbours of $v$ and nothing else,
and then find IO-private regions for every newly created face. 

Since the interior vertices form an independent set, all neighbours of $v$ are on the outer face,
and hence belong to $\{x_1,\dots,x_d\}$.  Since $G$ is 2-connected, $v$ has at least two such neighbours.
We have two cases.  

\medskip\noindent{\bf Case 1.} If $(v,x_d)$ is not an edge,
then $\bb{v}$ is a vertical segment that extends from the topmost to the bottommost of
the curves of its neighbours, and intersects these curves after expanding them rightwards.

Since the order of $\bb{x_1},\dots,\bb{x_d}$ is CCW
around the outer face, for every newly created face $f'$ incident to $v$ we have a region inside $P_0$
in which the curves of outer face vertices on $f'$ appear in CCW order. IO-private regions for these
faces can be found as shown in Figure~\ref{fig:inserting}(top).  Note that some of these private
regions intersect $v$ while others do not; both are acceptable since $v$ is on those faces, but not
an exterior vertex.

\begin{figure}[ht]
\centering
\includegraphics[width=0.8\columnwidth,page=2]{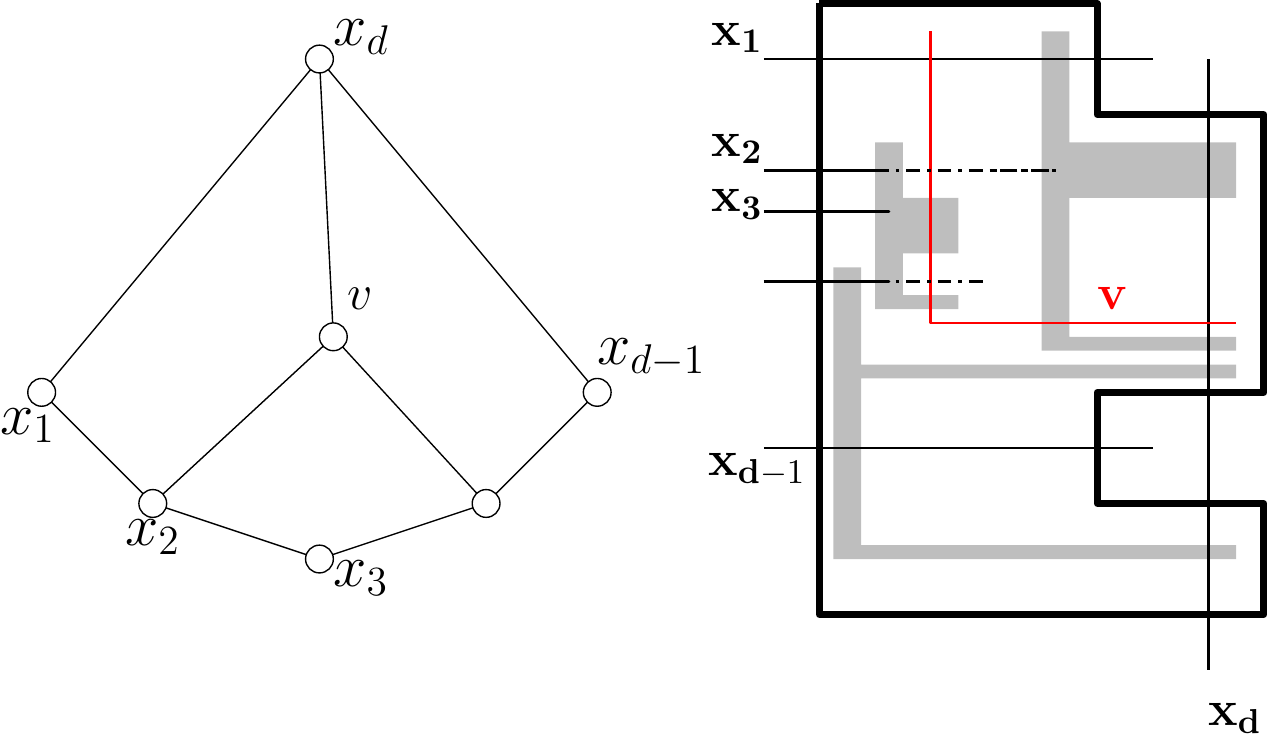}
%\newline
\includegraphics[width=0.8\columnwidth,page=1]{IO.pdf}
%\newline
\includegraphics[width=0.8\columnwidth,page=3]{IO.pdf}
\caption{Inserting a vertex into a face of an IO-graph. (Top) $v$ is not adjacent to $x_d$.
(Middle) $v$ is adjacent to $x_d$, but not $x_{d-1}$.  (Bottom) $v$ is adjacent to both
$x_d$ and $x_{d-1}$.} 
\label{fig:inserting}
\end{figure}

\medskip\noindent{\bf Case 2.} 
If $(v,x_d)$ is an edge, then $\bb{v}$ is an \sw{}, with the bend below $\bb{x_{d-1}}$ if $(v,x_{d-1})$ is
an edge and above $\bb{x_{d-1}}$ otherwise.  
The vertical segment of $\bb{v}$ extends from this bend to the topmost of
$v$'s neighbours in $\{\bb{x_1},\dots,\bb{x_{d-1}}\}$, and intersects the curves of 
these neighbours after expanding them rightwards.  The horizontal segment extends as
to intersect $\bb{x_d}$.

IO-private regions can again be found easily, see
Figure~\ref{fig:inserting}(middle and bottom).
%As all the curves in $R_O$ have the shape of an \sw{}, $\bb{v}$ is a \sw{}-curve and all the private regions in the constructed representation (including the new ones) have the same orientation as $P_0$, the algorithm will never insert a curve with shape other than \sw{}.

\medskip

Repeating this insertion operation for all interior vertices hence gives the desired representation of $G$.
\hspace*{\fill}$\square$

\section{Halin graphs}
\label{sec:Halin}

A {\em Halin-graph} \cite{cit:halin} is a graph obtained by taking a tree $T$ with $n\geq 3$ vertices
that has no vertex of degree 2 and connecting the leaves in a cycle.    Such graphs were originally
of interest since they are minimally 3-connected, but it was later shown that they are also
planar partial 3-trees \cite{cit:bod88}.  

We now prove Theorem~\ref{thm:Halin} and show that any Halin-graph $G$ has a \{\nec,\sw\}-representation.
We note here that our construction works even if $T$ has some vertices of degree 2.
Fix an embedding of $G$ such that the outer face is the cycle $C$ connecting the leaves of tree $T$.
Enumerate the outer face as $v_1,\dots,v_k$ in CCW order. Since every exterior vertex was a leaf of $T$, vertex $v_k$ has degree 3;
let $r$ be the interior vertex that is a neighbours of $v_k$.  Root $T$ at $r$ and enumerate the 
vertices of $T$ in post-order as $w_1,\dots,w_{n}$, starting
with the leaves (which are $v_1,\dots,v_k$) and ending with $r$.  

Let $G_i$ be the graph induced by $w_1,\dots,w_i$.  Call vertex $v_j$ {\em unfinished} in $G_i$
if it has a neighbour in $G-G_i$.  For $i=k,\dots,n$, we create an \{{\tt L}\}-representation of $G_i-(v_1,v_k)$
that satisfies the following:  
\begin{quotation}
For any unfinished vertex $v$, curve $\bb{v}$ ends in a horizontal
ray, and the top-to-bottom order of these rays corresponds to the CW order of the unfinished
vertices on the outer face while walking from $v_1$ to $v_{k}$.
\end{quotation}
The \{{\tt L}\}-representation of $G_{k}-(v_1,v_k)$ (i.e., the path $v_1,\dots,v_{k}$) is obtained easily
by placing the bend for $\bb{v_i}$ at $(i,-i)$, giving the vertical segment length $1+\varepsilon$
and leaving the horizontal segment as a ray as desired.  To add vertex $w_i$ for $i>k$,
let $x_1,\dots,x_d$ be its children in $T$; their curves have been placed already.  
Insert a vertical segment for $\bb{w_i}$ with $x$-coordinate $i$,
and extending from just
below the lowest curve of $\bb{x_1},\dots,\bb{x_d}$ to just above the highest.    The rays of
$\bb{x_1}, \dots,\bb{x_d}$ end at $x$-coordinate $i+\varepsilon$, while $\bb{w_i}$ appends a horizontal
ray at its lower endpoint.  

Since adding $w_i$ means that $x_1,\dots,x_d$ are now finished (no vertex has two parents), the
invariant holds. Continuing until $i=n$ yields an \{{\tt L}\}-representation of $G-(v_1,v_k)$.  
It remains to add an intersection for edge $(v_1,v_k)$.  To do so, we change the shape of $\bb{v_1}$.
Observe that its vertical segment was not used for any intersection, and that its horizontal segment
can be expanded until $(n+1,-1)$ without intersecting anything except its neighbours.  After this
expansion, we add a vertical segment going downward at its right end.  Since $v_{k}$ is a neighbour of $r$, curve
$\bb{v_k}$ ended when $\bb{r}$ was added, i.e., at $x$-coordinate $n+\varepsilon$, and we can extend it
until $x$-coordinate $n+1+\varepsilon$.    Hence $\bb{v_1}$  and $\bb{v_k}$ can meet at $(n+1,-k)$ if
we change the shape of $\bb{v_1}$ to $\nec{}$.
We have hence proved Theorem~\ref{thm:Halin}.
\hspace*{\fill}$\square$

\begin{figure}[ht]
\centering
\includegraphics[page=1,width=\columnwidth]{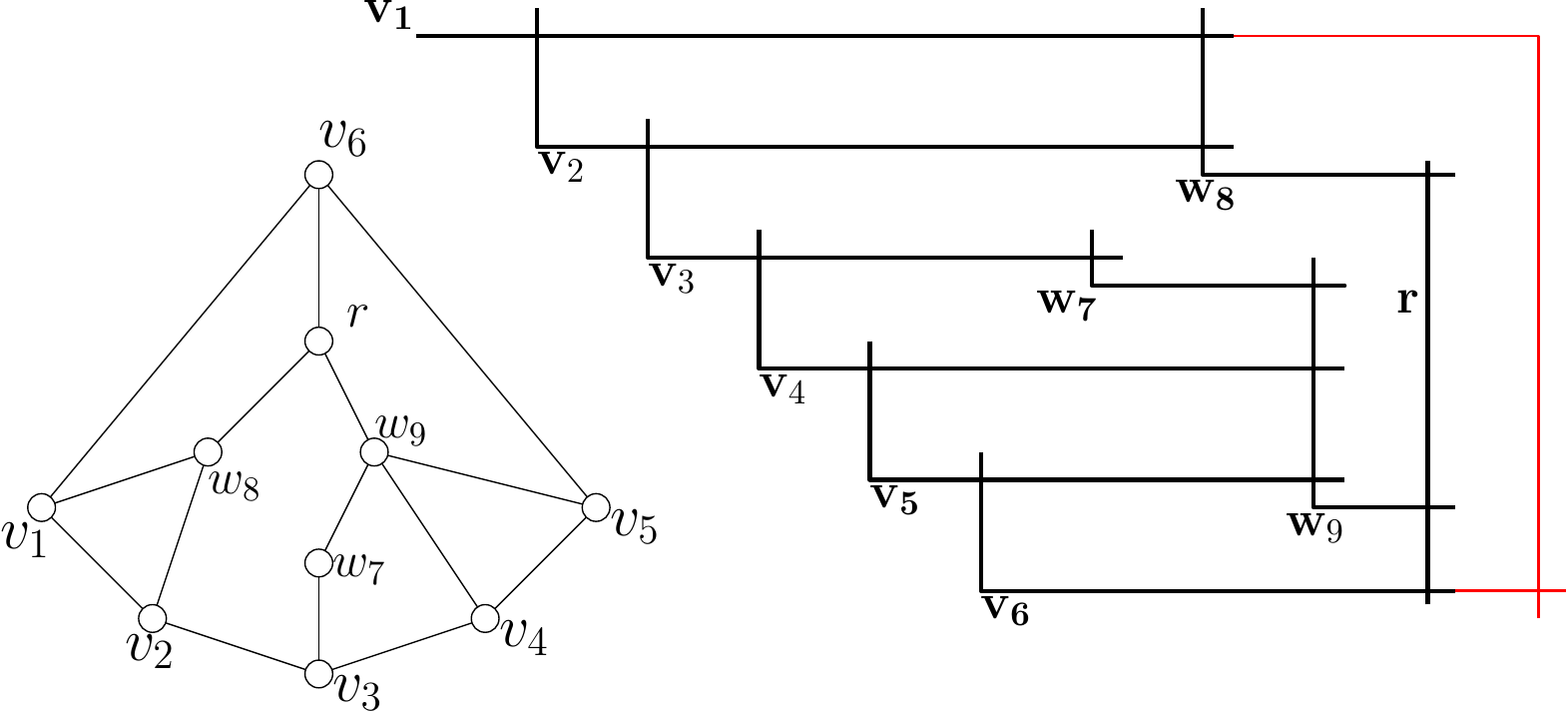}
\caption{Example of an extended Halin-graph and its $\{\sw{},\protect\nec{}\}$-representation,
obtained by changing the curve of $\bb{v_1}$ so that it intersects $\bb{v_k}$. } 
\label{fig:Halin}
\end{figure}

Notice that in the construction for Halin-graphs, any intersection of curves occurs near the
end one of the two curves.  Our result therefore holds not only for Halin graphs, but also for any subgraph of a Halin graph.  

We strongly suspect that Halin-graphs are in fact $\{\tt L\}$-intersection graphs.  We are able
to prove this in the case where $r$ has no neighbours on the outer face other than $v_k$.
We can then change the vertex ordering so that it ends with the children of $r$ in CCW order, followed by
$r$ and $v_k$.  Then $\bb{r}$ can be a horizontal segment crossing the vertical segments
of the children, and $\bb{v_k}$ can be placed entirely different to intersect $\bb{v_{k-1}},
\bb{v_1}$ and $\bb{r}$.  Figure~\ref{fig:Halin2} illustrates this idea.  Generalizing this
to all Halin graphs remains an open problem.

\begin{figure}[ht]
\centering
\includegraphics[page=1,width=.6\columnwidth]{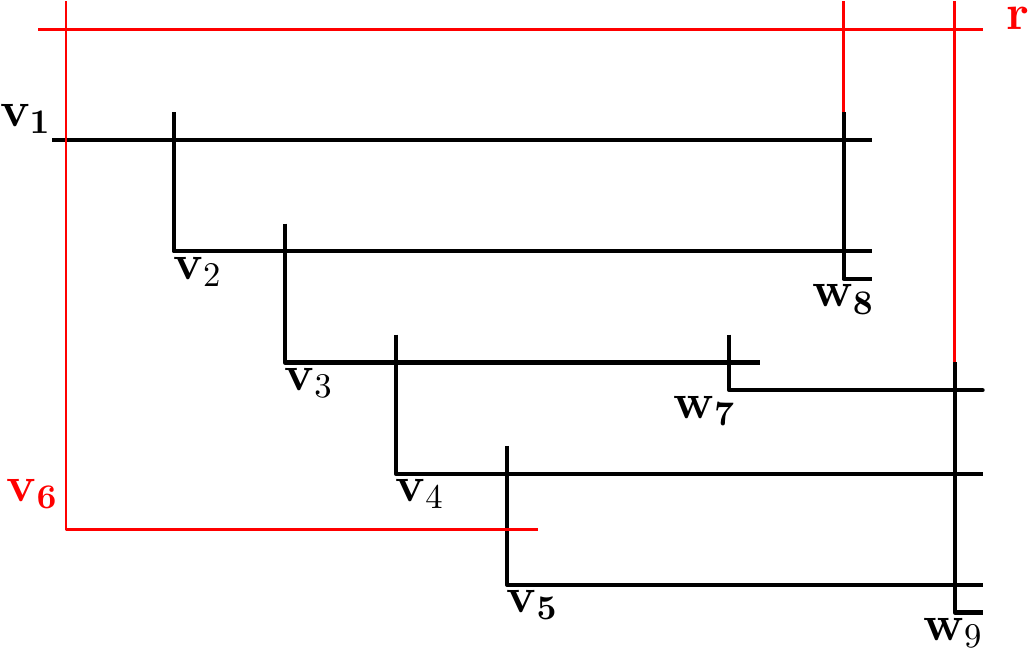}
\caption{$\{\sw\}$-representation,
obtained by changing the curve of $\bb{r}$ and $\bb{v_k}$, if $r$ has no 
other neighbours on the outer face. } 
\label{fig:Halin2}
\end{figure}

\section{Conclusion}
\label{sec:conclusion}

In this paper, we studied 1-string VPG-representations of planar graphs such that curves have at
most one bend.  It is not known whether all planar graphs have such a representation, but
curiously, also no planar graph is known that does not have an $\{\sw{}\}$-representation.
Felsner et al.~\cite{cit:mfcs} asked whether every planar graph has a $\{\nw{},\sw{}\}$-representation
since (as they point out) a positive answer would provide a different proof of Scheinerman's conjecture. 
They proved this for planar 3-trees.  

In this paper, we made another step towards their question and
showed that every planar partial $3$-tree has a $1$-string $B_1$-VPG representation.   We also showed
that IO-graphs and Halin-graphs have $\{\sw{}\}$-representations, except that for Halin-graphs one
vertex curve might be a $\nec{}$.

The obvious direction for future work is to show that all planar partial $3$-trees have 
$\{\sw{}\}$-representations, or at least 
$\{\sw{},\nw{}\}$-representations.
As a first step, an interesting subclass would be those 2-connected planar graphs $G$ where deleting the vertices 
on the outer face leaves a forest; these encompass both IO-graphs and Halin graphs.

\iffalse
Another possible direction aims more towards the question of whether all planar graphs have $1$-string $B_1$-VPG representations where all the curves have \sw{}- or \nw{}-shapes. Partial $3$-trees are graphs with treewidth at most $3$. The treewidth of planar graphs is unbounded. If planar graphs have $B_1$-VPG representations (regardless of whether they use only two or all four shapes), there must be a breaking point where one shape no longer suffices. What is this breaking point? Formally, what is the smallest $k$ such that $B_1$-VPG representations of planar (partial) $k$-trees require more than one shape of curves? What kind of $B_1$-VPG representations can be constructed for planar (partial) $4$-trees?
\fi

Note that all representations constructed in this paper are {\em ordered}, in the sense that the order of intersections along the curves of vertices corresponds to the order of edges around the vertex in a planar embedding.  This is not the case for the 1-string $B_2$-VPG-representations in our earlier construction \cite{cit:socg}.  
One possible avenue towards showing that planar graphs do not always have 
an $\{\sw{}\}$-representation is to restrict the attention to ordered representations first.  Thus,
is there a planar graph that has no 
ordered $\{\sw{}\}$-representation?

\bibliography{cccg-yfr}
\bibliographystyle{plain}

\end{document}